\newtheorem*{theorem}{Theorem}
\newtheorem*{lemma}{Statement}
\begin{document}

\title{\bf Haag's theorem  in $S\,O\,(1, k)$
invariant quantum field theory\footnote{Talk
given at the XVIth International Seminar on High Energy Physics QUARKS'2010,
Kolomna, Russia, 6-12 June, 2010 and the XIXth International Workshop on High Energy Physics and Quantum Field Theory, Golitsyno, Moscow, Russia, 8-15 September 2010.}}

\author{K.~V.~ Antipin$^{a}$, Yu.~S.~Vernov$^{b}$, M.~N.~Mnatsakanova$^{c}$
\\
$^a$ \small{\em Faculty of Physics,  Moscow State University, Moscow, Russia.} \\
$^b$ \small{\em Institute for Nuclear Research, Russian Academy of Sciences,} \\
\small{\em 60-the October Anniversary prospect 7a, Moscow, 117312, Russia. }\\
$^c$ \small{\em Institute of Nuclear Physics, Moscow State University,} \\
\small{\em  119992, Vorobyevy Gory, Moscow, Russia. }}
\date{}
\maketitle{}

\begin{abstract}
Generalized Haag's theorem has been proven in $S\,O\,(1,
k)$ invariant quantum field theory. Apart from the above mentioned
$k + 1$ variables there can be arbitrary number of additional
coordinates including noncommutative ones in the theory. New
consequences of generalized Haag's theorem are obtained.

It has been proven that the equality of four-point Wightman
functions in two theories leads to the equality of elastic
scattering amplitudes and thus the total cross-sections in these
theories.

In space-space noncommutative quantum field theory  in
four-dimensional case it has been proved that if in one of the
theories under consideration $S$-matrix is equal to unity, then in
another theory $S$-matrix is unity as well.
\end{abstract}

Key words: Noncommutative quantum field theory, Axiomatic approach.

\section{Introduction}
In this report we consider  Haag's theorem -- one of the most
important results of axiomatic approach in quantum field
theory~(see \cite{SW, BLT}). In the usual Hamiltonian formalism it
is assumed that asymptotic fields are
related with interacting fields by unitary transformation. \\
Haag's theorem shows that in accordance with Lorentz invariance of
the theory the interacting fields are also trivial which means
that corresponding S-matrix is equal to unity. So the usual
interaction representation can not exist in the Lorentz invariant
theory. In four dimensional case in accordance with the
generalized Haag's theorem four first Wightman functions coincide
in two related by the unitary transformation  theories.

Let us recall that n-point Wightman functions $W\,(x_1,  \ldots,
x_n) $ are $\langle \, \Psi_0, \varphi (x_1) \ldots \varphi (x_n)
\, \Psi_0 \, \rangle$, where $ \Psi_0$ is a vacuum vector.
Actually in accordance with translation invariance Wightman
functions are functions of variables $\xi_i = x_i - x_{i +1}$, $i
= 1, \ldots, n-1$. At first Haag's theorem is proved in $S\,O \,
(1,3)$ invariant theory in four dimensional case.

In last years noncommutative generalization of QFT - NC QFT -
attracts interest of physicists as in some cases NC QFT is the low
energy limit of string theory~(see \cite{SeWi,DN,Sz}). Haag's
theorem in NC QFT was considered in \cite{CPT, CMTV}.

NC QFT is defined by the Heisenberg-like commutation relations
between coordinates
\begin{equation} \label{cr}
[ {x}^{\mu}, {x}^{\nu}] =i \,\theta^{\mu \nu},
\end{equation}
where $\theta^{\mu\nu}$ is a constant antisymmetric matrix. \\
It is very important that NC QFT can be formulated in commutative
space if the usual product between operators (precisely between
corresponding test functions) is substituted by the star
(Moyal-type) product.

Now the theories in spaces of arbitrary dimensions are widely
considered. Thus it is interesting to consider Haag's theorem in
the general case of $k+1$ commutative variables (time and $k$
spatial coordinates) and arbitrary number $m$ of noncommutative
coordinates. This extension of Haag's theorem is a goal of our
work.

\section{Generalized Haag's theorem in four dimensional
case}
In axiomatic QFT  fields $\varphi\,(f)$  smeared on all four
coordinates are unbounded operators in the state vectors space. In
the derivation of Haag's theorem it is necessary to assume that
fields smeared on the spatial coordinates are well defined
operators.

Let us recall generalized Haag's theorem in four dimensional case.

\begin{theorem}
Let $\varphi_{1}(f,t)$  and $\varphi_{2}(f,t)$  belong to two sets
of irreducible operators in Hilbert spaces $H_{1}$ and $H_{2}$
correspondingly. We assume that both theories are Poincare
invariant, that is
\begin{equation}\label{h1}
U_j(a,\Lambda)\,\varphi_j(x)U_j^{-1}(a,\Lambda)=\varphi_j(\Lambda
x+a),
\end{equation}
\begin{equation}\label{h2}
U_j(a,\Lambda)\Psi_{0j}=\Psi_{0j},\qquad j=1,2.
\end{equation}
Let us suppose also that there exists  unitary transformation $V$
relating fields and vacuum states as well in two theories at any
$t$:
\begin{gather}
\varphi_2(f,t)=V\varphi_1(f,t)V^{-1}\label{h3},\\
c\Psi_{02}=V\Psi_{01}\label{h4},
\end{gather}
where $c$ is a complex number with module one.

Let as suppose that  vacuum states in two theories are invariant
under the same unitary transformation.

If in two theories there are not states with negative energies
then four first Wightman functions coincide in two theories.
\end{theorem}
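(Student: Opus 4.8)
\emph{Proof strategy.} The plan is to reduce everything to the equal-time Wightman functions of the two theories, and then propagate the resulting coincidence by analytic continuation, using the spectral condition and Lorentz invariance.

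\emph{Reduction to equal time.} First I would feed the intertwining relations \eqref{h3}--\eqref{h4} into the equal-time Wightman functions. For any spatial test functions $f_1,\dots,f_n$ and any common time $t$, substituting $\varphi_2(f_i,t)=V\varphi_1(f_i,t)V^{-1}$ and $\Psi_{02}=c^{-1}V\Psi_{01}$, cancelling the adjacent factors $V^{-1}V$, using unitarity of $V$ at the two ends, and $|c|=1$, one obtains
\[
\langle\Psi_{02},\,\varphi_2(f_1,t)\cdots\varphi_2(f_n,t)\,\Psi_{02}\rangle
=\langle\Psi_{01},\,\varphi_1(f_1,t)\cdots\varphi_1(f_n,t)\,\Psi_{01}\rangle .
\]
Thus all equal-time Wightman functions of the two theories coincide as distributions in the spatial difference variables $\vec\xi_i=\vec x_i-\vec x_{i+1}$ (translation invariance, the $\Lambda=\mathbf 1$ case of \eqref{h1}, is what allows us to pass to difference variables).

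\emph{Analytic continuation.} Next I would invoke the hypothesis that neither theory contains states of negative energy, i.e. the spectral condition. By the standard Wightman-theory arguments this makes each $W_n^{(j)}(\xi_1,\dots,\xi_{n-1})$ the boundary value of a function $\mathbf W_n^{(j)}$ holomorphic in the forward tube $\{\operatorname{Im}\zeta_i\in V_+\}$ (the open forward light cone $V_+$); Lorentz invariance \eqref{h1} together with the Bargmann--Hall--Wightman theorem extends it holomorphically to the permuted extended tube, whose real points are the Jost points. An equal-time configuration with pairwise spacelike separations has all $\xi_i=(0,\vec\xi_i)$, hence $\sum_i\lambda_i\xi_i$ spacelike for every $\lambda_i\ge 0$ not all zero, so it is a Jost point. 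By the previous step the holomorphic difference $\mathbf D_n=\mathbf W_n^{(1)}-\mathbf W_n^{(2)}$ vanishes on the set of all such equal-time Jost points.

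\emph{Propagation, and the restriction to the first four functions.} Finally, $\mathbf D_n$ is invariant under the complex Lorentz group acting on the extended tube, so its vanishing on the equal-time Jost points forces it to vanish on their entire complex-Lorentz orbit. The crux of the argument is that for $n\le 4$ this orbit is full-dimensional in the extended tube: a generic equal-time configuration has its $n-1\le 3$ spatial difference vectors linearly independent, spanning a nondegenerate spacelike subspace, and the complex Lorentz group is then transitive enough on such configurations to reach a neighbourhood of a generic real Jost point, whence $\mathbf D_n\equiv 0$ and, taking boundary values, $W_n^{(1)}=W_n^{(2)}$ for $n=1,2,3,4$. This is exactly where the dimension of spacetime enters: once $n-1\ge 4$, the difference vectors of an equal-time configuration are linearly dependent simply because there are only three spatial directions, and this dependence is a Lorentz-invariant constraint that confines the orbit to a proper subvariety of the Jost points, so the argument breaks down. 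I expect this dimension/transitivity step to be the real obstacle; the reduction and the analytic-continuation steps are essentially bookkeeping with the Wightman axioms. One further technical point to handle en route is the passage between the sharp-time fields $\varphi_j(f,t)$ appearing in the hypotheses and the fully smeared Wightman distributions together with their analytic continuations.
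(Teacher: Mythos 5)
Your proposal follows essentially the same route as the paper: coincidence of the equal-time Wightman functions via the unitary intertwiner and vacuum invariance, analyticity from the spectral (no negative energy) condition, and propagation of the equality from the equal-time configurations by Lorentz invariance, with the restriction to the first four functions coming from exactly the same dimension count, namely that the $n-1\le 3$ spatial difference vectors $\vec\xi_i$ can be chosen linearly independent (the paper takes them pairwise orthogonal) so that their Lorentz orbit is open, which fails for $n-1\ge 4$. The paper only sketches this argument; your rendering via the Bargmann--Hall--Wightman theorem and Jost points is a more explicit version of the same idea, not a different one.
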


It is important to note that, given rather common assumptions,
condition~(\ref{h4}) is a direct consequence of eq.~(\ref{h3})
(see {\bf Statement} at the end of the next section).

Let us give the idea of the proof.

The invariance of the vacuum states implies that in accordance
with conditions (\ref{h1}) and (\ref{h2}) Wightman functions
coincide at equal times.
\begin{equation}
\left(\Psi_{01},\varphi_1(t,x_1),\ldots,\varphi_1(t,x_n)\Psi_{01}\right)=
\left(\Psi_{02},\varphi_2(t,x_1),\ldots,\varphi_2(t,x_n)\Psi_{02}\right)
\end{equation}
Let us recall that  Wightman functions are analytical functions in
tubes ${T}_{n}^{-}: \; \nu \in {T}_{n}^{-}$ if $\nu = \xi - i\,
\eta, \; \xi$ is arbitrary, $\eta \in V^{+}$, which means that
$\eta_{0}^{2} - {\vec{\eta}}^{2} > 0.$

It can be shown that the equality  of Wightman functions at equal
times and their analyticity  lead to equality of four first
Wightman functions in two theories related by unitary
transformation at all points.

Let us point out that noncommutative coordinates belong to the
boundary of analyticity of Wightman functions. As in the
derivation of Haag's theorem only transformations of coordinates
which belong to the domain of analyticity are essential, we omit
the dependence of vectors on additional variables.

\section{Extension of generalized Haag's theorem}
Let us obtain the extension of Haag's theorem on the $S\,O(1, k)$
invariant theory.

As at $n > k$ vectors $\xi_{i} = (0, \vec{\xi}_{i})$ are linearly
dependent, then vectors related to them with Lorentz
transformation are linearly dependent too and thus can not form
the open set. Thus they can not determine Wightman functions.

Let us show that if $n \leq k$ then Wightman functions coincide
in two theories under consideration.

Indeed, as the vectors $\vec{\xi}_{i}$ are arbitrary we can choose
vector $\xi_{2} = (0, \vec{\xi}_{2})$ in such a way that it be
orthogonal  to $\xi_{1}$. Continuing this procedure we obtain the
set of vectors $\xi_{i}$ in such a way that they all be orthogonal
one to another.

As $\xi_{i} \bot \xi_{j}$ if $i \neq j$, then also
$\alpha\,\xi_{i} \bot \beta\,\xi_{j}, \; \alpha, \beta \in
\mathbb{R}$ are arbitrary. If $\tilde{\xi_{i}} = L\,\xi_{i}$,
where $L$ is a real Lorentz transformation, then $\tilde{\xi_{i}}
\bot \tilde{\xi_{j}}, \; i \neq j$ and also
$\alpha\,\tilde{\xi_{i}} \bot \beta\,\tilde{\xi_{j}}$. So these
points form the open subset and thus fully determine Wightman
functions owing to their analiticity.

As in two theories related by unitary transformation at equal
times first $k+1$  Wightman functions coincide on the open set,
then these functions coincide in all points.
\begin{lemma}
Condition (\ref{h4}) holds if
vacuum vectors $\Psi_0^i$ are translation invariant normalized
states with respect to shifts $U_i\,(a)$ along one of the
coordinates, satisfying  $S\,O\,(1, k)$ symmetry.
\end{lemma}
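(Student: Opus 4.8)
The plan is to run the Schur-lemma argument that underlies Haag's theorem, now using only those symmetries that preserve a fixed time slice. Write $U_j(a)$ for the one-parameter group of shifts along the distinguished coordinate, say $x^{1}$; by hypothesis $U_j(a)\Psi_{0j}=\Psi_{0j}$, and $U_j(a)$ acts geometrically on the spatially smeared fields, $U_j(a)\varphi_j(f,t)U_j(a)^{-1}=\varphi_j(f_a,t)$, where $f_a$ is the shift of $f$ along $x^{1}$ and — the crucial point — the time slice is left intact, so equal-time fields go to equal-time fields. First I would set $W(a):=V^{-1}U_2(a)\,V\,U_1(a)^{-1}$ and evaluate $W(a)\,\varphi_1(f,t)\,W(a)^{-1}$ one conjugation at a time using (\ref{h3}); the two shifts cancel and one gets $W(a)\,\varphi_1(f,t)\,W(a)^{-1}=\varphi_1(f,t)$ for every $f$ and every $t$. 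Since $\{\varphi_1(f,t)\}$ is irreducible, Schur's lemma forces $W(a)=e^{i\alpha(a)}\mathbf{1}$ with $a\mapsto e^{i\alpha(a)}$ a continuous character of $\mathbb{R}$, i.e. $V^{-1}U_2(a)V=e^{i\alpha(a)}U_1(a)$.

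Applying this to the vacuum, and using $U_1(a)\Psi_{01}=\Psi_{01}$, gives $U_2(a)(V\Psi_{01})=e^{i\alpha(a)}(V\Psi_{01})$, so $V\Psi_{01}$ is an eigenvector of the $x^{1}$-translations in the second theory. The same computation goes through verbatim for the shifts along every other spatial coordinate (the fields are smeared over all spatial variables, so these shifts are geometric and again fix the time slice) and, because (\ref{h3}) is assumed at every $t$, for time translations as well, since a time translation turns one equal-time field into another. Hence $V$ intertwines, up to characters, the whole translation group $\mathbb{R}^{k+1}$; as the vacua are translation invariant, $V\Psi_{01}$ carries a definite energy-momentum $q=(q^{0},\vec q)$, and symmetrically $V^{-1}\Psi_{02}$ carries energy-momentum $-q$ in the first theory.

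The crux is to prove $q=0$, and here the absence of negative-energy states enters. As self-adjoint operators in $H_1$ one has $V^{-1}P_2^{0}V=P_1^{0}+q^{0}\mathbf{1}$; the energy spectrum is bounded below by $0$ in either theory, the infimum being attained on the vacuum, and unitary conjugation preserves the spectrum, so taking infima of both sides gives $q^{0}=0$. Because the second theory is $SO(1,k)$ invariant and has no negative-energy states, its joint energy-momentum spectrum lies in the closed forward cone, $q^{0}\ge|\vec q|$; with $q^{0}=0$ this forces $\vec q=0$. (For $k\ge 3$ one may instead run the Schur argument for the rotation subgroup $SO(k)\subset SO(1,k)$, which also acts geometrically on the equal-time fields; being semisimple it admits no nontrivial character, so $V\Psi_{01}$ is genuinely $SO(k)$-invariant and hence carries zero spatial momentum.) Thus $V\Psi_{01}$ is annihilated by all translation generators, i.e. it is a translation-invariant unit vector of the second theory; by uniqueness of the vacuum, $V\Psi_{01}=c\,\Psi_{02}$, and $\|V\Psi_{01}\|=\|\Psi_{01}\|=1=\|\Psi_{02}\|$ gives $|c|=1$, which is precisely (\ref{h4}).

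I expect the genuine work to be in making the Schur step rigorous for unbounded smeared fields — fixing the precise notion of irreducibility, arranging a common invariant domain, and proving continuity of $a\mapsto\alpha(a)$ — and in appreciating that the boost directions of $SO(1,k)$ are \emph{not} available to this argument: a boost tilts the time slice and therefore does not preserve the equal-time field algebra, so the forward-cone spectral condition must be imported as a property of the second theory rather than transported across $V$.
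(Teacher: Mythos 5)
Your proof follows essentially the same route as the paper: both hinge on showing that $V^{-1}U_2(a)\,V\,U_1(a)^{-1}$ commutes with the whole equal-time field set, invoking irreducibility to reduce it to a multiple of the identity, and then applying the resulting intertwining relation to the vacuum. The one place you go beyond the paper is the phase: the paper disposes of the scalar multiple with the remark ``in the limit $a=0$'' (which by itself only fixes its value at $a=0$), whereas you correctly identify it as a continuous character $e^{i\alpha(a)}$ and eliminate it using the spectral condition and uniqueness of the vacuum (or, for $k\ge 3$, the absence of nontrivial characters of $SO(k)$) --- a tightening of the published argument rather than a departure from it.
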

\begin{proof}
Indeed, it is easy to note that operator $ U_1^{- 1}\,(a)\,V^{-
1}\,U_2\,(a)\,V$ commutes with operators $\tilde {\varphi}_1 \,(t,
\vec{x})$ and, thanks to irreducibility of the operator set, is
proportional to identity operator. In limit $a = 0$ we obtain that
\begin{equation}\label{34}
U_1^{- 1}\,(a)\,V^{- 1}\,U_2\,(a)\,V = \mathbb{I}.
\end{equation}
From eq.~(\ref{34}) it follows that if
\begin{equation}\label{35}
U_1\,(a)\,\Psi_0^1 = \Psi_0^1,
\end{equation}
then
\begin{equation}\label{36}
U_2\,(a)\,V\,\Psi_0^1 = V\,\Psi_0^1,
\end{equation}
that is condition~(\ref{h4}) is satisfied.
\end{proof}

\section{Consequences of Haag's theorem}
Let us obtain consequences of Haag's theorem in $S \, O (1,
k)$ invariant theory $k \geq 3$.  In accordance with reduction
formula
\begin{multline}
\label{red} \left\langle p_1'\ldots p_l'{}^{\text{\emph{out}}}\mid
p_1 \ldots p_m^{\text{\emph{in}}}\right\rangle =\\
i^{l+m}\int dy_1\ldots dy_l\, dx_1\ldots dx_m\,
f_{p_1'}^*(y_1)\times\ldots\\
\times f_{p_l'}^\star(y_l)\vec K_{y_1} \ldots \vec
K_{y_l}\Bigl(\Psi_0,T\bigl(\varphi(y_1)
\ldots \varphi(x_m)\bigr)\Psi_0\Bigr)\times\\
\times \vec K_{x_1} \ldots \vec K_{x_m}f_{p_1}(x_1)\ldots
f_{p_m}(x_m),
\end{multline}
where $K_x=\left(\partial_{\mu}\partial^{\mu}\right)_x+m^2$ is
Klein-Gordon operator and $f_p(x)=\frac{e^{-ipx}}{(2\pi)^{3/2}}$
is a corresponding wave function.

Let us consider $2 \Rightarrow k - 1$ processes. We see that
amplitudes of these processes coincide in two theories.

From the equality
$$
W_1 \, (x_1, \ldots, x_4) = W_2 \, (x_1, \ldots, x_4)
$$
it follows that
\begin{equation} \label{42}
< p_3, p_4 | p_1, p_2 >_{1} = < p_3, p_4 | p_1, p_2 >_{2}
\end{equation}
for any $p_{i}$. Having applied this equality for the forward
elastic scattering amplitudes, we obtain that, according to the
optical theorem, the total cross-sections for the fields
${\varphi}_1 \, (x)$ and ${\varphi}_2 \, (x)$ coincide. If now the
$S$-matrix for the field ${\varphi}_1 \, (x)$ is unity, then it is
also unity for field ${\varphi}_2 \, (x)$.

Let us consider Haag's theorem in the $S \, O (1, 1)$ invariant
field theory. In accordance with previous result equality of only
two-point Wightman functions takes place. Let us prove that if one
of considered theories is trivial, that is the corresponding
S-matrix is equal to unity, then another is trivial too.

Let us point out that in the $S \, O (1, 1)$ invariant theory it
is sufficient that the spectral condition, which implies non
existence of tachyons, is valid only in respect to commutative
coordinates. Also it is sufficient that  translation invariance is
valid only in respect to the commutative coordinates. The equality
of two-point Wightman functions in the two
theories leads to the following conclusion: \\
if local commutativity condition in respect to commutative
coordinates is fulfilled and the current in one of the theories is
equal to zero, then another current is zero as well. \\ Indeed as
$W_{1} \, (x_1, x_2) = W_{2} \, (x_1, x_2)$, then also
\begin{equation} \label{43}
< \Psi_0^1,  j^1_{\bar{f}} \,  j^1_{f} \,  \, \Psi_0^1 > = <
\Psi_0^2, j^2_{\bar{f}} \,  j^2_{f}  \, \Psi_0^2> = 0,
\end{equation}
where
$$
j_f^{i} = (\square + m^{2}) \, \varphi_f^{i}.
$$
If, for example, $j_f^1 = 0$, then in the positive metric case
\begin{equation} \label{zero_cur}
j_f^2  \, \Psi_0^2 = 0.
\end{equation}
From the last formula and local commutativity condition it follows
that \cite{BLT}
\begin{equation}\nonumber
j_f^2  \equiv  0.
\end{equation}
Our statement is proved.

\section{Conclusions}
Generalized Haag's theorem has been proven in $S\,O\,(1, k)$
invariant quantum field theory. Apart from the above mentioned $k
+ 1$ variables there can be arbitrary number of additional
coordinates including noncommutative ones in the theory.

In $S\,O\,(1, k)$ invariant theory new consequences of generalized
Haag's theorem are obtained. It has been proven that the equality
of four-point Wightman functions in two theories leads to the
equality of elastic scattering amplitudes and thus to the equality
of the total cross-sections in these theories. Also it has been
shown that at $ k > 3$ the equality of $(k+1)$~- point Wightman
functions in two theories leads to the equality of scattering
amplitudes of some inelastic processes.

In $S\,O\,(1, 1)$  invariant theory it has been proved that if in
one of the theories under consideration $S$-matrix is equal to
unity, then in another theory $S$-matrix is unity as well.

\end{document}